\title{Maximum-Profit Routing Problem with Multiple Vehicles per Site}
\author{Bogdan Armaselu}
\institute{barmaselub@gmail.com}
\date{}
\begin{document}

\maketitle
\begin{center}

\end{center}

\abstract{
We consider the Maximum-Profit Routing Problem (MPRP), a variant of pick-up routing problem introduced in \cite{Armaselu-arXiv-2016, Armaselu-PETRA}, in which the goal is to maximize total profit.
The original MPRP restricts vehicles to visit a site at most once.
In this paper, we consider extensions of MPRP, in which a site may be visited by a vehicle multiple times.
Specifically, we consider two versions: 
one in which the quantity to be picked up at each site is constant in time (MPRP-M),
and one with time-varying supplied quantities, which increase linearly in time (MPRP-VS).
For each of these versions, we come up with a constant-factor polynomial-time approximaltion scheme.
}

\keywords{routing, maximum-profit, pick-up, variable supply, multiple vehicles}

\large

\section{\Large{Introduction}}
\label{s:intro}

Consider a set of $n$ points in the plane, called \textit{sites}.
Each site $S_i$ supplies a quantity $q_i$ of a certain unit-priced product that needs to be collected
and has an operating time window of $[s_i, e_i]$, where $0 \leq s_i \leq e_i \leq T$, for some constant $T > 0$, $\forall{i = 1, \dots, n}$.
The distance $d_{i,j}$ between sites $s_i$ and $s_j$ is the euclidean distance.
We are also given a fleet of $m$ vehicles $V = \{v_1, \dots, v_m\}$, each having the same capacity $Q$.
All vehicles are assumed to travel at unit speed and have unit fuel consumption per distance travelled.
Only one vehicle may visit any given site and all vehicles must start and end their tour at the same given depot $D(x_D, y_D)$.
The goal is to compute, for every vehicle $v_k$, a route $r_k$ that collects a quantity $q^k_i$ of the good such that the total profit of all routes is maximized, 
where the profit of a route $r_k$ is the total quantity collected $\sum_{s_i \in r_k}{q^k_i}$ (called \textit{reward}) minus the total distance travelled via $r_k$ (called \textit{costs}).
We call this problem the Maximum Profit pick-up Routing Problem (MPRP).

\hspace{5mm}MPRP was introduced in \cite{Armaselu-arXiv-2016, Armaselu-PETRA}, which focuses on the case with fixed quantities and only one vehicle per site.
In this paper, we consider the case where multiple vehicles may visit one site.
We first study the MPRP problem with Multiple Vehicles Per Site and fixed quantities (MPRP-M),
in which $q_i$'s are constant in time.
We also consider the MPRP problem with Multiple Vehicles Per Site and Time-Variable Quantities supplied (MPRP-MVS),
in which $q_i(t)$'s are 0 for $t \notin [s_i, e_i]$ and linearly increasing in time for $t \in [s_i, e_i]$.
Compared to the single vehicle versions, this adds the availability constraint that $\sum_{j, u \leq t}{q_{iju}} \leq q_i(t), \forall{i, t}$.
However, a vehicle may not visit a site more than once.

\hspace{5mm}MPRP is known to be strongly NP-hard \cite{Armaselu-PETRA}.
By extensions, all the variants proposed above are also strongly NP-hard. 
That is, they have no algorithm running in time polynomial in the value of the input unless P = NP.

\hspace{5mm}Applications of the results presented in this paper include public transportation in various settings, such as cities \cite{Armaselu-PETRA}, inter-city railway transportation, domestic or international flight, etc.
Another important application is in for-profit waste pick-up, in which the reward is proportional to the amount of waste collected.

\subsection{\large{Related work}}
\label{ss:related-work}

General graph TSP is proven to be hard to approximate within any constant factor \cite{Sahni}.
However, TSP on special graphs, such as complete graphs with metric distances ane euclidean distances, do have approximation algorithms.
For instance, general metric TSP has a $O(n^3)$ time $1.5$-approximation algorithm by Christofides \cite{Christofides}.
Euclidean metric TSP even has a PTAS achieving $1 + \epsilon$ performance ratio via Arora's $O(n (\log n)^{O(1/\epsilon)})$ time algorithm \cite{Arora}.
On the other hand, metric TSP was shown to be APX-complete by Papadimitriou et. al \cite{Papadimitriou} 
and, to date, the best known lower bound is $123/122$ by Karpinski et. al \cite{Karpinski}.

\hspace{5mm}Various generalizations of TSP have also been considered.
Bansal et. al introduced the Deadline-TSP and the Time Window-TSP problems \cite{Bansal}.
In Deadline-TSP, every site $S_i$ has a deadline $D_i$, 
while in Time Window-TSP, every site $S_i$ has a time window $[s_i, e_i]$.
Both of these problems associate every site $S_i$ with a reward $q_i$ and ask for a single tour that maximized the total reward collected.
For Deadline-TSP, they provide an $O(\log n)$-approximation algorithm, while for Time Window-TSP they describe an $O(\log^2 n)$-approximation.

\hspace{5mm}There are also variants involving multiple vehicles.
Fisher introduced the Vehicle Routing Problem with capacity constraint (VRP), in which a fleet of $m$ vehicles is given, and each customer $S_i$ has a demand $q_i$ of a certain product \cite{Fisher94}.
Their solution uses an iterative lagrangian relaxation of the constraints.
Later, Fisher et. al introduced the Vehicle Routing Problem with Time Windows and capacity constraints (VRPTW),
 in which customers have a demand $q_i$ of the product, as well as operating time windows, and the $m$ vehicles have non-uniform capacity constraints \cite{Fisher95}.
They solve this problem using a linear programming approach.

\hspace{5mm}Golden et. al studied the Capacitated Arc Routing Problem (CARP) with uniform vehicle capacity 
where edges, rather than vertices, have customer demands, and the goal is to minimize the travelled distance subject to meeting all demands.
They prove that CARP can be approximated within a constant factor when the triangle inequality is satisfied \cite{Golden}.
Later, van Bevern extended the result to general undirected graphs \cite{Bevern}.

\hspace{5mm}Wang et. al considered the Multi-Depot Vehicle Routing Problem (MDVRPTW) with Timw Windows and Multi-type Vehicle Number Limits \cite{Wang}.
The main difference to our problem is that the goal is to minimize the number of vehicles used (if feasible) or maximize the number of visited customers (if infeasible). 
Although they solve the problem using a genetic algorithm approach, their algorithm is iterative and has no performance guarantee with respect to the approximation factor.

\hspace{5mm}Versions of vehicle routing with multiple vehicles per customer have also been considered.
In \cite{Drexl}, Drexl does a comprehensive survey of variations of vehicle routing with multiple synchronization constraints (VRPMS),
e.g. in which a customer may be visited by two vehicles (VRPTT) \cite{Bredstrom}, or by 3 or more vehicles \cite{Burckert}.

\hspace{5mm}Armaselu and Daescu solved the fixed-supply, single vehicle version of MPRP \cite{Armaselu-arXiv-2016, Armaselu-PETRA}.
They provide two APXs for MPRP, both running in $O(n^{11})$ time, assuming $q_i(l_i) \leq \alpha q_j(l_j), \forall{i, j = 1, \dots, n}$, for some constant $\alpha > 1$.
The best performance ratio, $15 \log T$, among the two algorithms, is achieved by an approach involving well-separated pair decompositions.

%

\subsection{\large{Our contributions}}
\label{ss:structure}

We give a constant-factor approximation algorithm for MPRP-M and MPRP-MVS, which uses clever reductions to some known problems.
Specifically, we solve MPRP-M in $O(n^{11})$ time for an approximation ratio of $\simeq 44 \log T$ and 
MPRP-MVS in $O((\frac{n}{\epsilon})^{11})$ time within an approximation ratio of $44 \log T (1 + \epsilon)(1 + \frac{1}{1 + \sqrt{m}})^2$.

\hspace{5mm}The rest of the paper is structured as follows.
In Section \ref{s:mprp-m}, we describe our MPRP-M algorthm, and then, in Section \ref{s:mprp-mvs}, we describe our solution to MPRP-MVS.
Finally, in Section \ref{s:conclusion}, we conclude and list some future directions.

\section{\Large{MPRP-M}}
\label{s:mprp-m}

We come up with a reduction from MPRP-M to MPRP. 
%
%
%

\hspace{5mm}Let $d(0, i) = d(i, 0)$ be the distance from the depot to site $S_i, \forall{i: 1 \leq i \leq n}$.

\hspace{5mm}An instance $I$ of MPRP (or MPRP-M) can be described as a set $I = (S = \{S_i: 1 \leq i \leq n\}, D = (x_0, y_0), d = \{d(i, j): 0 \leq i < j \leq n\}, T, Q, m)$,
where $S_i = (x_i, y_i, s_i, e_i, q_i)$.

\hspace{5mm}A solution $Sol$ to $I$ is a set $Sol = (R, P)$.
Here $R$ is a routing function assigning, to every pair of sites or depot $(i, j), 0 \leq i, j \leq n$, a number $k(i, j): 0 \leq k(i, j) \leq m$, 
denoting the index of the vehicle operating the path from $S_i$ to $S_j$, or 0 if no such vehicle exists.
$P$ is a pickup function assigning, to every site $S_i$ on vehicle $v_k$'s route, a quantity $q_{k,i} \leq q_i, Q$ to be picked up by $v_k$.

\hspace{5mm}A solution $Sol'$ to $I'$ is a routing function with the same properties as the function $R$ in$Sol$.

\hspace{5mm}Denote by $MPRP$ (resp., $MPRP-M$) the set of MPRP (resp., MPRP-M) instances.

\hspace{5mm}Starting from $I = (\{(x_i, y_i, s_i, e_i, q_i): 1 \leq i \leq n\}, (x_0, y_0), \{d_{i,j}: 0 \leq i < j \leq n\}, T, Q, m) \in MPRP-M$,
let $I' = (\{(x_i, y_i, s_i, e_i, q_i): 1 \leq i \leq n\}, (x_0, y_0), \{d_{i,j}: 0 \leq i < j \leq n\}, T, Q, m) \in MPRP$.
That is, $I'$ has \textit{the same nodes and distances} as $I$.
We solve $I'$ using the algorithm in \cite{Armaselu-PETRA} and denote by $Sol'$ the solution.

\hspace{5mm}We transform the solution $Sol'$ to an MPRP instance $I'$, into a solution $Sol$ to an MPRP-M instance $I$, as follows.
For every vehicle $v_k$ of $I'$, let $Q_k$ be the total quantity collected by $v_k$ from its assigned tour $\tau'_k \in Sol'$.
It may be possible that some sites in $\tau'_k$ need to be visited multiple times in $I$, producing a new tour $\tau_k \in Sol$.
To figure out which ones need to, we look at tuples of sites $(S_u, S_{u'}, S_i, S_{i'}, S_j, S_{j'}) \in S^2 \times (S \cup D)^4$ in $I'$ where 

\begin{equation}
\begin{cases}
S_u \in \tau'_k\\
S_{u'} \in \tau'_{k'}\\
S_i, S_j$ consecutive in $\tau'_k\\
S_{i'}, S_{j'}$ consecutive in $\tau'_{k'}
\end{cases}
\end{equation}

\hspace{5mm}The goal is to maximize the total amount picked up from $\tau'_k$ and $\tau'_{k'} \cup \{S_u\}$.
Denote by $q_{u', k'}$ the quanity assigned to be picked up by $v_{k'}$ from $S_{u'}$,
by $q_{u', k}$ the quanity assigned to be picked up by $v_{k}$ from $S_{u'}$,
and by $q_{u, k}$ the quanity assigned to be picked up by $v_{k}$ from $S_{u}$.
We have

\begin{equation}
\begin{cases}
Q_k + q_{u, k} + q_{u', k} - q_{u'} \leq Q\\
Q_{k'} + q_{u', k'} \leq Q\\
q_{u', k} + q_{u', k'} \leq q_{u'}\\
q_{u, k} \leq q_u
\end{cases}
\end{equation}

\hspace{5mm}and our goal is to maximize $q_{u', k} + q_{u', k'} + q_{u, k}$.
To do that, we set $x = q_{u, k}, y = q_{u', k}, z = q_{u', k'}$, and we come to the following linear program.

\begin{center} 
Minimize $x + y + z$ s.t.
\end{center}

\begin{equation}
\begin{cases}
x \leq q_u\\
y + z \leq q_{u'}\\
z \leq Q - Q_{k'}
\end{cases}
\end{equation}

\hspace{5mm}After solving the linear program, we check the additional constraint $x + y \leq Q - Q_k + q_{u'}$.
If it is satisfied, we assign $v_{k'}$ to pickup $z^*$ from $S_{u'}$, and $v_k$ to pickup $x^*$ from $S_u$ and $y^*$ from $S_{u'}$ in $Sol$,
where $(x^*, y^*, z^*)$ is a solution to the linear program.
In order for $Sol$ to achieve a better profit than $Sol'$ through this re-assignment, the following need to hold.
Let $S_u < S_v$ denote that $S_u$ is visited before $S_v$ in $\tau'_{k'}$.

\begin{equation}
\begin{cases}
x^* + y^* + z^* - q_{u'} > d(i', i) + d(i, i'') - d(i', i'')\\
d(i', i) + d(i, i'') - d(i', i'') \leq \min\{t_{k', v'} - s_{v'}: S_{v'} < S_{i'}\} + \min\{e_{w'} - t_{k', w'}: S_{j'} < S_{w'}\}
\end{cases}
\end{equation}

where $t_{k, i}$ is the time when $v_k$ visits $S_i$.
That is, insertion of $S_i$ into $\tau_{k'}$ in $Sol$ does not introduce time window violations

\hspace{5mm}See Figure 2 for an illustration of this re-assignment.

\begin{figure}
\label{fig:reduction-option1}
\begin{center}
	\includegraphics[scale=0.4]{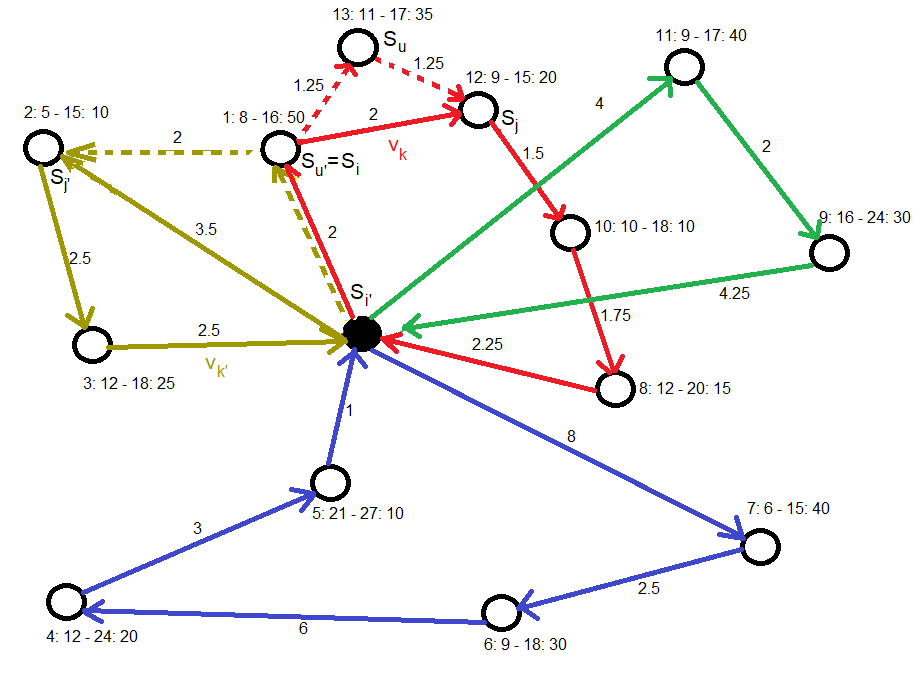}
	\caption{Tours of $Sol'$ are depicted in solid lines and tours of $Sol$ are depicted using dashed lines. 
	$v_k$ is assigned to pickup 35 from $S_u$ and 20 from $S_{u'}$, while $v_{k'}$ is assigned to pickup 35 from $S_{u'}$.
	Since time windows are not violated, $Sol$ is more profitable than $Sol'$.}
\end{center}`
\end{figure}

\hspace{5mm}For each vehicle $v_k$ and site $S_{u'} \in \tau'_k$, 
we look at all possible tuples $(S_u, S_i, S_{i'}, S_j, S_{j'}) \in S \times (S \cup D)^4$ satisfying the abovementioned additional constraints and solve the linear program to find the optimal quantities to pickup, 
then select the tuples that maximize the profit gain after re-assignment, and then perform the re-assignment.

\hspace{5mm}One arising concern is that the order in which the pairs $(v_k, S_{u'})$ are selected produces different results.
However, while this may happen, it may not impact the optmiality of the re-assignment by more than a constant factor, as we shall see below.

\begin{lemma}
Let $O$ be an ordering of the pairs $(v_k, S_{u'})$ and let $\Delta(k, u')$ be the gain obtained through re-assignment when selecting pair $(v_k, S_{u'})$ (or 0, if re-assignment is infeasible).
Then $\sum_{k, S_{u'} \in \tau'_k} \Delta(k, u') = \Delta \in [C, 4 C]$, where $C$ does not depend on $O$.
\end{lemma}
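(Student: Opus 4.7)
The plan is to bound, for any two orderings $O_1, O_2$ of the pairs $(v_k, S_{u'})$, the ratio of the corresponding total gains $\Delta_{O_1}$ and $\Delta_{O_2}$ by $4$; then defining $C = \min_O \Delta_O$ immediately yields the stated interval $[C, 4C]$. The argument is a standard exchange/charging scheme, tailored to the very local way a single re-assignment propagates.

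First, I would record the locality of a single re-assignment. When pair $(v_k, S_{u'})$ is processed, the only routes whose structure changes are $\tau_k$ (where $S_{u'}$ is inserted between the consecutive sites $S_i, S_j$, with pickups updated at $S_u$ and $S_{u'}$) and $\tau_{k'}$ (where the pickup at $S_{u'}$ is reduced by exactly $y^*$). Every other route is untouched, and both the capacity residuals $Q - Q_k$, $Q - Q_{k'}$ in the LP $(3)$ and the window-slack terms in $(4)$ depend only on the current state of $\tau_k$ and $\tau_{k'}$. Therefore $\Delta_O(k, u')$ depends on $O$ only through the history of prior re-assignments that touched one of $\tau_k, \tau_{k'}$.

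Next, I would prove a local conflict bound. For a fixed pair $p = (v_k, S_{u'})$, let $\mathcal{P}(p)$ be the set of other pairs whose execution could change $\Delta(p)$. Each element of $\mathcal{P}(p)$ must share vehicle $v_k$ or vehicle $v_{k'}$ with $p$; moreover, since the capacity residuals are monotonically consumed and the window slacks are monotonically tightened as more re-assignments accumulate on a vehicle, a careful case split over the four slot positions $i, j, i', j'$ in $(1)$ should show that the conflict graph has in-degree at most $3$, i.e.\ $|\mathcal{P}(p)| \leq 3$ for every $p$.

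With the local bound in hand, the charging step is routine. For any pair $p$ with $\Delta_{O_2}(p) > \Delta_{O_1}(p)$, I charge the deficit $\Delta_{O_2}(p) - \Delta_{O_1}(p)$ to the at most three pairs in $\mathcal{P}(p)$ that were processed before $p$ under $O_1$; each of those charged pairs absorbs at most its own $\Delta_{O_1}$-gain per conflict-neighbor. Aggregating yields $\Delta_{O_2} \leq 4 \,\Delta_{O_1}$, and symmetry gives the reverse inequality, so $C = \min_O \Delta_O$ satisfies $\Delta_O \in [C, 4C]$ for every $O$. The main obstacle is the local bound $|\mathcal{P}(p)| \leq 3$: one has to verify that no cascading interaction through constraints $(3)$--$(4)$ creates a hidden long-range dependence on pairs that share neither $v_k$ nor $v_{k'}$. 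Inspecting each coefficient in $(2)$--$(4)$ in turn should confirm that every term is local to $v_k$ and $v_{k'}$, so no such cascade is possible.
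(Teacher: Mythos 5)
Your proposal takes a genuinely different route from the paper, and it has a gap at its load-bearing step. The paper does not compare two orderings at all: it lower-bounds each individual gain $\Delta(k,u')$ using condition (4), sums over all pairs to get $\Delta \geq Q^* + Q^s + Q^e = Q^* - \sum_k(e_i - s_i) =: C$, where $Q^*, Q^s, Q^e$ are sums over the instance and the fixed routing $\tau'$ (hence independent of $O$), and then upper-bounds each gain via the triangle-inequality/Christofides-style estimate $x^* + y^* + z^* - q_{u'} \leq 2\,(d(i',i) - d(i,i'') + d(i',i''))$ together with the assumption that $\min_i q_i$ dominates $2T$, yielding $\Delta \leq 4(Q^* + Q^s + Q^e) = 4C$. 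In other words, $\Delta$ is sandwiched between two explicit, order-independent quantities; no exchange or charging argument is needed.

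The gap in your argument is the conflict bound $|\mathcal{P}(p)| \leq 3$, which you assert ``a careful case split should show'' but never establish, and which appears to be false. Every pair $(v_k, S_{w'})$ sharing vehicle $v_k$ (or $v_{k'}$) with $p$ competes for the same capacity residuals $Q - Q_k$ and $Q - Q_{k'}$ in the linear program (3) and tightens the same time-window slacks in (4) along the entire tour, so the set of pairs whose prior execution can change $\Delta(p)$ is bounded only by the number of sites on the two tours involved, which can be $\Theta(n)$, not $3$. With conflict degree $d$ your charging scheme yields a ratio of at most $1 + d$, so the factor $4$ does not follow. A secondary unproven step is the absorption claim that the deficit charged to a predecessor $q$ is at most $\Delta_{O_1}(q)$; that is an exchange property about the LP optima under reduced residuals and would itself require an argument. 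Finally, note that your choice $C = \min_O \Delta_O$ satisfies the letter of the lemma but loses the explicit form $C = Q^* - \sum_k(e_i - s_i)$ that the paper's subsequent correctness argument for the factor-$4$ reduction implicitly relies on.
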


\begin{proof}
From (4), it follows that
\[
\Delta(k, u') = \max\{0, x^* + y^* + z^* - q_{u'} - d(i', i) - d(i, i'') + d(i', i'')\}
\]
\[
\geq \max\{0, x^* + y^* + z^* - q_{u'}\}
\]
\[
 + \max\{-t_{k', v'} + s_{v'}: S_{v'} < S_{i'}\}
\]
\[
 + \max\{-e_{w'} + t_{k', w'}: S_{w'} > S_{j'}\}
\tag{5} \label{eq5}
\]

where $S_i < S_j$ indicates that $S_i$ is before $S_j$ in $\tau'_{k'}$.

\bigskip

\hspace{5mm}Since $x^* = q_{u, k}, y^* = q_{u', k}, z^* = q_{u', k'}$, we get 
\[
\Delta = \sum_{k, S_{u'} \in \tau'_k} \Delta(k, u')
\]
\[
\geq \sum_{k, S_{u'} \in \tau'_k} \max\{0, q_{u, k} + q_{u', k} + q_{u', k'} - q_{u'}\}
\]
\[ 
+ \sum_{k, S_{u'} \in \tau'_k} \max\{-t_{k', v'} + s_{v'}: S_{v'} < S_{i'}\}
\]
\[ 
+ \sum_{k, S_{u'} \in \tau'_k} \max\{-e_{w'} + t_{k', w'}: S_{w'} > S_{j'}\}
\tag{6} \label{eq6}
\]

\hspace{5mm}We have 
\[
\sum_{k, S_{u'} \in \tau'_k} q_{u', k} = \sum_{k, S_{u'} \in \tau'_k} q_{u'} = \sum_{k, S_i \in \tau'_k} q_i,
\tag{7} \label{eq7}
\]
\[
\sum_{k, S_{u'} \in \tau'_k} \max\{-t_{k', v'} + s_{v'}: S_{v'} < S_{i'}\} \geq \sum_k (s_i - t_{k, i}: i \in \tau_k),
\tag{8} \label{eq8}
\]
\[
\sum_{k, S_{u'} \in \tau'_k} \max\{-e_{w'} + t_{k', w'}: S_{w'} > S_{j'}\} \geq \sum_k (t_{k, i} - e_i: i \in \tau_k).
\tag{9} \label{eq9}
\]

\bigskip

\hspace{5mm}Let 

$Q^* = \sum_{\tau'_k, S_i \in \tau'_k} q_i, Q^s = \sum_k (s_i - t_{k, i}: i \in \tau_k), Q^e = \sum_k (t_{k, i} - e_i: i \in \tau_k)$.

Since $\sum_{k, S_{u'} \in \tau'_k} \max\{0, q_{u, k} + q_{u', k}\} \geq Q^*$, it follows that

\[
\Delta \geq Q^* + Q^s+ Q^e.
\tag{10} \label{eq10}
\]

Moreover, note that $Q^s + Q^e = \sum_k(s_i - t_{k, i} + t_{k, i} - e_i) = \sum_k(s_i - e_i)$.

That is, $\Delta \geq C = Q^* - \sum_k(e_i - s_i)$, where $C$ does not depend on $O$.

\bigskip

\hspace{5mm}Note that 
\[
x^* + y^* + z^* - q_{u'} \leq 2 (d(i', i) - d(i, i'') + d(i', i'')), \forall{u', i, i', i''},
\tag{11} \label{eq11}
\]

since otherwise one would be able to create an optimal metric-space MST-based tour of cost $> 2 C$, which is a contradiction of the result in \cite{Christofides}.

\hspace{5mm}Thus,
\[
\Delta \leq 2 (\sum_{k, S_{u'} \in \tau'_k} \max\{0, q_{u, k} + q_{u', k} + q_{u', k'} - q_{u'}\}
\]
\[ 
+ \sum_{k, S_{u'} \in \tau'_k} \max\{-t_{k', v'} + s_{v'}: S_{v'} < S_{i'}\}
\]
\[ 
+ \sum_{k, S_{u'} \in \tau'_k} \max\{-e_{w'} + t_{k', w'}: S_{w'} > S_{j'}\}).
\tag{12} \label{eq12}
\]

\hspace{5mm}Since 
\[
\max\{-t_{k', v'} + s_{v'}: S_{v'} < S_{i'}\} + \max\{-e_{w'} + t_{k', w'}: S_{w'} > S_{j'}\}
\]
\[
\leq 2 \max{s_{v'} - e_{w'}} \leq 2 T,
\]

for sufficiently large $\frac{\min{q_i}}{2 T}$, which is reasonable since $T$ is constant, we get
\[
\Delta \leq 4 \sum_{k, S_{u'} \in \tau'_k} \max\{0, q_{u, k} + q_{u', k} + q_{u', k'} - q_{u'}\}
\]
\[
\leq 4 (Q^* + Q^s+ Q^e).
\tag{13} \label{eq13}
\]

\hspace{5mm}That is, $\Delta \leq 4 C$.
\end{proof}

Thus, we first run a MPRP solver to obtain a routing $\tau'$ and then select the pairs $(v_k, S_{u'})$ in the order given by each $\tau'_k$ for every $k$.
For each of the $O(m n)$ pairs, we inspect all possible $O(n^5)$ tuples in $O(n)$ time per tuple, which is the time required to verify constraint 2.
That is, our reduction is done in $O(m n^7)$ time.

\hspace{5mm}As for the correctness of the reduction, suppose there exists a routing $\tau''$ for $I'$ yielding a profit increased by more than a factor of 4 compared to $\tau'$.
In order to do that, $\tau''$ must pickup a quantity $q''_{u'} >4  q_{u'}$ from some $S_{u'}$.
However, this implies that either $S_{u'}$ is not re-assigned in $\tau$ and thus gives an increased profit for $I'$, 
or is re-assigned to one of $S_u, S_i, S_{i'}, S_j, S_{j'}$ and thus gives an increased profit for one of these sites.
Both options lead to a contradiction.
Now suppose there exists a routing $\tau''$ for $I$ yielding an increased profit compared to $\tau$.
In order for that to happen, either some site $S_{u'}$ that was never added or involved in a re-assignment picks up a quantity $q''_{u', k} > q_{u', k}$ ,
implying an increased profit for $q''_{u'} > 4 q'_{u'}$  for $S_{u'}$ in $\tau'$,
or some site $S_{u'}$ involved in a re-assignment with $(S_u, S_i, S_{i'}, S_j, S_{j'})$ picks up a quantity $q''_{u', k} > 4 q_{u', k}$ ,
implying a quantity was picked up from some site (wlog assumed to be $S_u$) in $\tau''$ which is more than 4 times the one picked from the samee site in $\tau'$.
In both cases, a contradiction follows due to Lemma 1.
Hence, we have reduced MPRP-M to MPRP within an approximation ratio of 4 in $O(m n^7)$ time.

\hspace{5mm}By using the $\simeq 11 \log T$-approximation for MPRP in \cite{Armaselu-PETRA} for $O(n^{11})$ time, we thus solve MPRP-M in $O(n^{11})$ time.

\hspace{5mm}We have proved the following result.

\begin{theorem}
MPRP-M can be solved in $O(n^{11})$ time for an approximation ratio of $\simeq 44 \log T$.
\end{theorem}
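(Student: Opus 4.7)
The plan is to assemble the theorem from two ingredients already established in the section: the reduction from MPRP-M to MPRP (whose quality is controlled by Lemma 1), and the $\simeq 11 \log T$-approximation for MPRP from \cite{Armaselu-PETRA} running in $O(n^{11})$ time. Concretely, I would proceed in three short steps and rely on the lemma to carry the bulk of the work.

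First, I would spell out the pipeline: given an MPRP-M instance $I$, form the MPRP instance $I'$ with identical sites, depot, distances, time windows and capacities, invoke the MPRP approximation of \cite{Armaselu-PETRA} to obtain a routing $\tau'$ whose profit is within a $\simeq 11 \log T$ factor of the MPRP optimum on $I'$, and then apply the tuple-based re-assignment procedure described earlier in this section to convert $\tau'$ into a feasible MPRP-M routing $\tau$ for $I$. The re-assignment visits each pair $(v_k, S_{u'})$ in the order induced by $\tau'$, and for each of the $O(mn)$ such pairs inspects $O(n^5)$ tuples at $O(n)$ time per tuple, so this post-processing takes $O(m n^7)$ time.

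Second, I would bound the loss introduced by the reduction. The key observation, already sketched in the paragraph following Lemma 1, is that the optimum MPRP-M profit on $I$ exceeds the optimum MPRP profit on $I'$ by at most a factor of $4$: any feasible MPRP-M routing that beat this bound would force some site $S_{u'}$ to be picked up in a total quantity more than four times what the re-assignment yields, which by the case analysis on whether $S_{u'}$ was re-assigned or participated in a tuple contradicts the upper bound $\Delta \leq 4C$ of Lemma 1. Composing the two guarantees gives an overall performance ratio of $4 \cdot 11 \log T \simeq 44 \log T$ for $\tau$ with respect to the MPRP-M optimum.

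Third, I would verify the running time. The MPRP call is $O(n^{11})$, the reduction is $O(m n^7)$, and since we may assume $m \leq n$ (extra vehicles beyond $n$ can never improve the solution), the total complexity remains $O(n^{11})$. The main obstacle is not in the theorem itself but in the correctness of the reduction that feeds it, which is exactly what Lemma 1 establishes; once that is in hand, the proof of Theorem 1 reduces to a clean composition of the approximation factors and running times of the two building blocks, and should read as little more than an invocation of Lemma 1 together with the MPRP result of \cite{Armaselu-PETRA}.
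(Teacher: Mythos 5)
Your proposal follows essentially the same route as the paper: the same reduction pipeline (build $I'$ with identical sites and distances, run the $\simeq 11\log T$ MPRP approximation of \cite{Armaselu-PETRA}, then apply the tuple-based re-assignment), the same factor-$4$ loss bound derived from Lemma 1 via the case analysis on whether a site was re-assigned, and the same $O(mn^7)$ reduction time dominated by the $O(n^{11})$ MPRP call. Your explicit remark that one may assume $m \leq n$ to absorb the $O(mn^7)$ term is a small but welcome addition that the paper leaves implicit.
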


%
%
%
%
%
%
%

\section{\Large{MPRP-MVS}}
\label{s:mprp-mvs}

In MPRP-MVS, quantities $q_i$ supplied at sites vary linearly as a function of time, i.e. $q_i(t) = q_i(e_i) \frac{t - s_i}{e_i - s_i}$. 

\hspace{5mm}We adapt the algorithm in \cite{Armaselu-arXiv-2020} for solving MPRP-VS, to work in our case, 
by using a similar reduction as the one from MPRP-M to MPRP described in the previous section, to reduce MPRP-MVS to MPRP-VS.

\hspace{5mm}Given an instance $I = (S = \{S_i: 1 \leq i \leq n\}, D = (x_0, y_0), d = \{d(i, j): 0 \leq i < j \leq n\}, T, Q, m)$ of MPRP-MVS, 
we apply the reduction in \cite{Armaselu-arXiv-2020} to obtain an instance $I'' \in MPRP-M$, and then solve MPRP-M using the algorithm in Section \ref{s:mprp-m}.
Again, this is not straightforward since the approach in \cite{Armaselu-arXiv-2020} depends on single vehicle assignment which is not the case here.
Thus, we need a more insightful construction of $I''$ than directly applying the algorithm in \cite{Armaselu-arXiv-2020}.

\hspace{5mm}For each site $S_i$, we split $[s_i, e_i]$ into $N = 1 + \frac{\ln(1/\alpha)}{\epsilon}$ intervals, for some $\epsilon > 0$, where $\alpha > 1$ is the smallest constant such that $q_i(e_i) \leq \alpha q_j(e_j), \forall{i, j}$.
Denote these intervals by $L_{i, l}: 1 \leq l \leq N$, i.e. $L_{i,l} = [s_i + \frac{e_i - s_i}{(1+\epsilon)^{N-l+1}}, s_i + \frac{e_i - s_i}{(1+\epsilon)^{N-l}}]$.
After performing this split, we construct a set of sites $S'$ where, for each original site $S_i$, $S'$ contains $N$ sites $S''_{i, \tau}, 1 \leq \tau \leq N$.
To each newly constructed site $S''_{i, l}$, we assign a quantity $q'_{i, l} = q_i(e_i) \frac{l - 0.5}{N - 1}$ and a time window $I_{i,l}$.
We then run the MPRP-M Option 1 algorithm described in Section \ref{s:mprp-m} on the transformed instance $I'' = (S'', D, d, T, Q, m)$ 
(note that $I'' \in MPRP-M$ since the sites in $S''$ now supply constant quantities).

\hspace{5mm}Denote by $\tau''$ the routing obtained by running the MPRP-M algorithm on $I''$.
We now analyze the properties of $\tau''$ in order to put a bound on its performance ratio.

\hspace{5mm}We know from \cite{Armaselu-arXiv-2020} that an optimal MPRP algorithm run on $S'$ may collect a quantity at least $\frac{1}{1 + \epsilon}$ as much as an optimal MPRP-VS algorithm run on $S$.
Since for $m = 1$, an instance of MPRP-M also belongs to MPRP, and an instance of MPRP-MVS is also an instance of MPRP-VS, we get the following.

\begin{lemma}
For $m = 1$, an optimal MPRP-M algorithm when run on $S''$ may collect a quantity at least $\frac{1}{1 + \epsilon}$ as high as an optimal MPRP-MVS algorithm when run on $S$.
\end{lemma}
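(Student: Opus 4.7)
The plan is to derive the lemma as an almost immediate corollary of the approximation guarantee for the single-vehicle discretization in \cite{Armaselu-arXiv-2020}, exploiting the fact that the multi-vehicle-per-site extensions collapse to their single-visit counterparts when $m = 1$.

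First I would unpack the problem definitions at $m = 1$. With a single vehicle, the availability constraint $\sum_{j,u \le t} q_{iju} \le q_i(t)$ of MPRP-MVS involves only one index $j$, and the ``no vehicle visits a site more than once'' restriction means each site is visited at most once in total. Hence every feasible schedule for MPRP-MVS at $m = 1$ is already a feasible schedule for MPRP-VS on $S$, and vice versa, with identical profit. The same collapse holds between MPRP-M on $S''$ and MPRP on $S''$: with $m = 1$ and the constant quantities $q'_{i,l}$ built from the discretization, the multi-vehicle-per-site feature is vacuous. Thus it suffices to compare an optimum of MPRP on $S''$ against an optimum of MPRP-VS on $S$.

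Next I would invoke the discretization guarantee recalled just before the lemma: from \cite{Armaselu-arXiv-2020}, the optimal MPRP solution on the split-site instance $S'$ (constructed with the same $N = 1 + \frac{\ln(1/\alpha)}{\epsilon}$ intervals and the same assigned quantities $q'_{i,l}$ and time windows $L_{i,l}$) collects at least a $\frac{1}{1+\epsilon}$ fraction of what the optimal MPRP-VS algorithm collects on $S$. Because our $S''$ is built by the same construction as $S'$ in that paper, an optimal MPRP algorithm on $S''$ enjoys the same $\frac{1}{1+\epsilon}$ bound against the MPRP-VS optimum on $S$. Combined with the two collapses from the previous paragraph, this yields the claimed ratio.

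The only non-routine point is verifying that the $m = 1$ instance of MPRP-M really is identical to an MPRP instance on $S''$ in the precise sense used in \cite{Armaselu-arXiv-2020}; this reduces to checking that the quantities $q'_{i,l} = q_i(e_i)\frac{l-0.5}{N-1}$ and windows $I_{i,l}$ used here match those used there, and that the single-vehicle capacity constraint $Q$ is imposed identically. Once this bookkeeping is done, no further inequality is needed and the lemma follows.
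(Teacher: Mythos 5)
Your proposal matches the paper's own argument: the paper likewise derives the lemma by combining the cited discretization guarantee from \cite{Armaselu-arXiv-2020} (optimal MPRP on the split-site instance collects at least $\frac{1}{1+\epsilon}$ of optimal MPRP-VS on $S$) with the observation that at $m=1$ an MPRP-M instance is an MPRP instance and an MPRP-MVS instance is an MPRP-VS instance. You spell out the bookkeeping (that $S''$ coincides with the $S'$ construction of the cited paper) more explicitly than the paper does, but the route is the same.
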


\begin{lemma} \cite{Armaselu-arXiv-2020} 
Let $A$ be an algorithm for a MPRP (resp., MPRP-M) and let $P(I)$ be the profit obtained by $A$ when run on an instance $I$ with one vehicle.
Then, on an instance $I'$ with $m$ vehicles, the profit obtained by $A$ is $P(I') \geq \frac{P(I)}{(1 + \frac{1}{1 + \sqrt{m}})^2}$.
\end{lemma}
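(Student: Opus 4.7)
The plan is to exhibit a feasible $m$-vehicle solution for $I'$ whose profit realises the claimed lower bound, by decomposing the single-vehicle tour produced by $A$ on $I$. Let $\tau$ be the tour output by $A$ on $I$, with profit $P(I)$ equal to reward minus cost. I would partition $\tau$ into $m$ depot-anchored subtours $\tau_1,\dots,\tau_m$, assign one to each vehicle of $I'$, and argue that the resulting profit degrades from $P(I)$ by at most a multiplicative factor of $(1+1/(1+\sqrt{m}))^2$.

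First, for the cost side, I would use a Haimovich--Rinnooy Kan style tour-partitioning: traverse $\tau$ in visit order and cut it into $m$ contiguous chunks, closing off each chunk with two depot edges. The only extra distance incurred is the $m-1$ pairs of inserted depot shortcuts. A standard planar packing argument, using a $\sqrt{m}\times\sqrt{m}$ grid anchored at the depot $D$, shows that the mean depot detour over the chunks is at most a $1/(1+\sqrt{m})$ fraction of the original cost of $\tau$, which accounts for one factor of $(1+1/(1+\sqrt{m}))$ in the degradation. Second, for the reward side, the partition may force a vehicle's chunk to exceed the shared capacity $Q$ by at most one site's supply; shifting that site to a neighbouring chunk, guided by the same grid, loses at most a $1/(1+\sqrt{m})$ fraction of the collectible reward per chunk, contributing the second factor.

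The main obstacle will be time-window feasibility: arbitrary cuts of $\tau$ may schedule some site outside its window $[s_i,e_i]$ even though $\tau$ was feasible, because each new subtour restarts its clock at the depot. I would handle this by restricting the allowed breakpoints to sites whose visit time under $\tau$ lies strictly interior to $(s_i,e_i)$ by at least the expected depot detour, and by absorbing any site that cannot be so placed into the same $1/(1+\sqrt{m})$ slack that already controls the reward side. Capacity is then satisfied automatically because the grid-based redistribution balances chunk supply up to one site, and combining the two $(1+1/(1+\sqrt{m}))$ factors for cost and reward yields the claimed inequality $P(I')\geq P(I)/(1+1/(1+\sqrt{m}))^2$. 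The routine remaining verification is that the routing and pickup functions so constructed form a valid MPRP (resp.\ MPRP-M) solution on $I'$.
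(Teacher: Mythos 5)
The first thing to note is that the paper does not prove this lemma at all: it is stated with a citation to \cite{Armaselu-arXiv-2020} and used as a black box, so there is no in-paper proof to compare yours against. Judged on its own terms, your proposal has a structural problem before any of the technical steps. You set out to ``exhibit a feasible $m$-vehicle solution for $I'$ whose profit realises the claimed lower bound,'' but the lemma is not an existence statement about feasible solutions --- it is a claim about the profit that the specific algorithm $A$ actually outputs on $I'$. An existence version of the claim is trivially true with loss factor $1$: any single-vehicle solution is already feasible for the $m$-vehicle instance (assign the whole tour to one vehicle and park the rest at the depot), so $\mathrm{OPT}(I') \geq \mathrm{OPT}(I)$ with no degradation. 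The only non-trivial content of the lemma lies in how the particular algorithm of \cite{Armaselu-PETRA}/\cite{Armaselu-arXiv-2020} behaves when handed $m$ vehicles, and your construction of a bespoke partition of $\tau$ says nothing about that, since $A$ is under no obligation to produce your partition.

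Even setting that aside, the two quantitative steps do not hold as stated. On the cost side, the claim that a $\sqrt{m}\times\sqrt{m}$ grid packing argument bounds the added depot detours by a $1/(1+\sqrt{m})$ fraction of the cost of $\tau$ is false in general: if all sites lie in a small cluster at distance $D_0$ from the depot, the tour $\tau$ costs about $2D_0$ while each of the $m-1$ extra pairs of depot edges costs about $2D_0$, a multiplicative blow-up of order $m$, not $1+1/(1+\sqrt{m})$. The Haimovich--Rinnooy Kan analysis you are invoking bounds the extra cost by the radial quantity $\tfrac{2}{Q}\sum_i q_i\, d(0,i)$ and only turns this into a constant factor by comparing against an optimum that is itself lower-bounded by the radial term; no such lower bound is available here, where the comparison point is the single-vehicle profit $P(I)$, which can be dominated by rewards rather than distances. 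On the feasibility side, the time-window repair is asserted rather than argued: there is no reason the sites whose windows are violated by re-anchoring each chunk at the depot carry only a $1/(1+\sqrt{m})$ fraction of the reward --- in the worst case every site has a tight window and no valid breakpoint exists. To make progress you would need either the actual argument from \cite{Armaselu-arXiv-2020} (which presumably exploits the structure of that paper's algorithm) or a reformulation of the lemma as a statement about approximation ratios that you can then attack directly.
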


\hspace{5mm}By running the $\frac{1}{44 \log T}$-optimal MPRP-M algorithm in Section \ref{s:mprp-m} on $I''$, 
we get the following a profit at least $\frac{1}{44 \log T (1 + \frac{1}{1 + \sqrt{m}})}$ as high as an optimal MPRP-MVS algorithm when run on $I$.

\hspace{5mm}Since travel costs are less than the rewards generated by the quantities collected, 
a performance ratio of $\frac{1}{1 + \epsilon}$ in the quantities induces a performance ratio of $\frac{1}{1 + \epsilon}$ in the profits as well.
Putting this together with the lemma above, we get the following.

\begin{lemma}
When run on $I''$, the MPRP-M algorithm in Section \ref{s:mprp-m} may obtain a profit at least $\frac{1}{44 \log T (1 + \epsilon)(1 + \frac{1}{1 + \sqrt{m}})^2}$ as high as an optimal MPRP-MVS algorithm when run on $I$.
\end{lemma}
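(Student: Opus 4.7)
The plan is to chain together three previously established bounds into the target ratio. Let $A$ denote the Section \ref{s:mprp-m} algorithm, write $P(A,J)$ for its profit on instance $J$, and $\mathrm{OPT}^{(k)}(J)$ for the optimum of the relevant problem on $J$ restricted to $k$ vehicles. The three ingredients are Theorem 1 (giving $P(A,J) \geq \mathrm{OPT}_M(J)/(44\log T)$ on any MPRP-M instance $J$), Lemma 2 (giving $\mathrm{OPT}_M^{(1)}(I'') \geq \mathrm{OPT}_{MVS}^{(1)}(I)/(1+\epsilon)$ on collected quantities, lifted to profits via the reward-dominates-cost remark preceding the statement), and the cited multi-vehicle lemma from \cite{Armaselu-arXiv-2020} (contributing the factor $(1+1/(1+\sqrt m))^2$ when translating between the single- and multi-vehicle regimes).

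Concretely, I would first establish the bound in the single-vehicle setting by composing Theorem 1 with Lemma 2:
\[
P^{(1)}(A, I'') \;\geq\; \frac{\mathrm{OPT}_M^{(1)}(I'')}{44 \log T} \;\geq\; \frac{\mathrm{OPT}_{MVS}^{(1)}(I)}{44 \log T\,(1+\epsilon)}.
\]
Then I would invoke the cited multi-vehicle lemma, applied to $A$ on $I''$, to upgrade to the $m$-vehicle regime:
\[
P(A, I'') \;\geq\; \frac{P^{(1)}(A, I'')}{(1+1/(1+\sqrt m))^2} \;\geq\; \frac{\mathrm{OPT}_{MVS}^{(1)}(I)}{44 \log T\,(1+\epsilon)(1+1/(1+\sqrt m))^2},
\]
and finally argue that the single-vehicle MPRP-MVS optimum on $I$ differs from the $m$-vehicle optimum only by a factor already absorbed into the $(1+1/(1+\sqrt m))^2$ slack, yielding the claim.

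The main obstacle is precisely this final translation. As quoted, the cited lemma is a lower bound on an algorithm's $m$-vehicle profit in terms of its $1$-vehicle profit, whereas the statement asks for a comparison to the genuine $m$-vehicle optimum $\mathrm{OPT}_{MVS}(I)$. I would settle this by instantiating the lemma with a hypothetical exact MPRP-MVS solver $A^\ast$ (which exists, possibly in exponential time, and suffices since the cited lemma quantifies over all algorithms) and reading it as $\mathrm{OPT}_{MVS}(I) \geq P^{(1)}(A^\ast, I)/(1+1/(1+\sqrt m))^2$; careful bookkeeping is then needed to make sure the $(1+1/(1+\sqrt m))^2$ factor enters exactly once rather than being compounded with the copy obtained from applying the lemma to $A$. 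Once this is settled, the claim follows by a routine multiplication of the three ratios above.
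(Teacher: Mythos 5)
Your overall strategy---multiplying together the $44\log T$ factor from Theorem 1, the $(1+\epsilon)$ factor from Lemma 2 (lifted from quantities to profits via the costs-below-rewards remark), and the $(1+\frac{1}{1+\sqrt m})^2$ factor from the cited multi-vehicle lemma---is exactly what the paper does; its proof of this lemma amounts to the sentence ``putting this together with the lemma above.'' So at the level of approach you match the paper. The difference is that you are more explicit about \emph{where} each factor enters, and in doing so you have correctly isolated the weak joint: Lemma 2 compares optima only in the single-vehicle regime, while the target statement compares against the genuine $m$-vehicle MPRP-MVS optimum.

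The problem is that your proposed resolution of that joint does not work. Instantiating Lemma 3 with an exact solver $A^\ast$ gives $\mathrm{OPT}_{MVS}(I) = P(A^\ast, I) \geq P^{(1)}(A^\ast, I)\,/\,(1+\frac{1}{1+\sqrt m})^2 = \mathrm{OPT}^{(1)}_{MVS}(I)\,/\,(1+\frac{1}{1+\sqrt m})^2$, which is the trivial direction (an $m$-vehicle optimum is never worse than a one-vehicle optimum) and contributes nothing. What your chain actually needs at its last link is the \emph{reverse} inequality $\mathrm{OPT}^{(1)}_{MVS}(I) \geq \mathrm{OPT}_{MVS}(I)\,/\,(1+\frac{1}{1+\sqrt m})^2$, i.e., an upper bound on how much $m$ vehicles can gain over one; Lemma 3 as quoted is a lower bound in the opposite direction and does not supply this (and such a bound is dubious in general, since $m$ vehicles can collect up to $m$ times the capacity of one). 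Moreover, because you apply Theorem 1 at $m=1$ and then use Lemma 3 once to climb back to $m$ vehicles on the algorithm side, any additional bridge between $\mathrm{OPT}^{(1)}_{MVS}(I)$ and $\mathrm{OPT}_{MVS}(I)$ would introduce a second copy of the factor and overshoot the claimed ratio. The way the paper intends the factors to compose is different: Theorem 1 is applied directly in the $m$-vehicle regime (so no factor is spent on the algorithm side), and the single copy of $(1+\frac{1}{1+\sqrt m})^2$ is spent entirely on extending Lemma 2 from one vehicle to $m$ vehicles on the optimum side, i.e., on the inequality $\mathrm{OPT}_M(I'') \geq \mathrm{OPT}_{MVS}(I)\,/\,\bigl((1+\epsilon)(1+\frac{1}{1+\sqrt m})^2\bigr)$. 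Neither your write-up nor the paper actually proves that composed inequality from the stated form of Lemma 3; that is the genuine missing step you would need to fill.
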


\hspace{5mm}Note that the reduction described before Lemma 2 takes $O(\frac{n}{\epsilon})$ time.
Then, running MPRP-M on $I''$ which has $|S''| = O(\frac{n}{\epsilon})$ sites, takes $O((\frac{n}{\epsilon})^{11})$ time.
Thus, we have proved the following result.

\begin{theorem}
MPRP-MVS can be solved in $O((\frac{n}{\epsilon})^{11})$ time within an approximation ratio of $44 \log T (1 + \epsilon)(1 + \frac{1}{1 + \sqrt{m}})^2$.
\end{theorem}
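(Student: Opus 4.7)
The plan is to assemble the theorem from three ingredients already put in place: the discretization reduction from MPRP-MVS to MPRP-M, the single-vehicle quantity bound (Lemma 2), and the multi-vehicle loss bound (Lemma 3). First, I would take the given instance $I$ of MPRP-MVS and perform the interval subdivision described just before Lemma 2: split each time window $[s_i, e_i]$ into $N = 1 + \ln(1/\alpha)/\epsilon$ geometrically-spaced sub-intervals $L_{i,l}$, and spawn a new copy-site $S''_{i,l}$ with constant quantity $q'_{i,l} = q_i(e_i)(l-0.5)/(N-1)$ on each sub-interval. The resulting instance $I''$ has $|S''| = N n = O(n/\epsilon)$ sites and, by construction, belongs to MPRP-M, so the algorithm of Section \ref{s:mprp-m} (hence Theorem 1) applies directly to $I''$.

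Second, I would track the approximation loss layer by layer. By Lemma 2, for $m = 1$ an optimal MPRP-M solution on $I''$ collects at least a $1/(1+\epsilon)$ fraction of the quantity that an optimal MPRP-MVS algorithm would collect on $I$. Extending from $m=1$ to $m$ vehicles via Lemma 3 costs an additional $(1 + 1/(1 + \sqrt{m}))^2$ factor. Running the MPRP-M approximation from Section \ref{s:mprp-m} on $I''$ then introduces the $\simeq 44 \log T$ distortion of Theorem 1. The key translation step, already flagged in the excerpt as the remark that travel costs are dominated by rewards, converts the quantity ratio $1/(1+\epsilon)$ into a profit ratio $1/(1+\epsilon)$; combining this with the two other multiplicative losses yields precisely the claimed bound of $44 \log T (1+\epsilon)(1 + 1/(1+\sqrt{m}))^2$ on profit (this is exactly the content of Lemma 4).

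Third, for the running time, the subdivision constructs $I''$ in $O(n/\epsilon)$ time. Since the MPRP-M algorithm of Section \ref{s:mprp-m} runs in $O((n')^{11})$ on an $n'$-site input, feeding it $I''$ with $n' = O(n/\epsilon)$ gives a total of $O((n/\epsilon)^{11})$, which dominates the $O(n/\epsilon)$ reduction cost; this establishes the stated time bound.

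The step I expect to be most delicate is justifying that the three loss factors genuinely compose multiplicatively without interaction. Lemma 2's $1/(1+\epsilon)$ quantity guarantee originates from the single-vehicle MPRP-VS to MPRP pipeline of \cite{Armaselu-arXiv-2020}, whereas here the copy-sites $S''_{i,l}$ may be hit by distinct vehicles and, moreover, may be subject to the re-assignment rules that drove the factor-$4$ argument in Section \ref{s:mprp-m}. I would need to verify that those re-assignments preserve the validity of the discretized time windows $I_{i,l}$, and that the $(1 + 1/(1+\sqrt{m}))^2$ fan-out in Lemma 3 is applied \emph{after}, not concurrently with, the discretization; the remaining bookkeeping, including the conversion from quantity ratio to profit ratio, is routine given the $T$-constant regime already assumed in the paper.
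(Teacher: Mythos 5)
Your proposal follows essentially the same route as the paper: the same discretization of each $[s_i,e_i]$ into $N = 1 + \ln(1/\alpha)/\epsilon$ sub-intervals to produce an MPRP-M instance $I''$ with $O(n/\epsilon)$ sites, the same multiplicative composition of the $1/(1+\epsilon)$ quantity loss (Lemma 2), the $(1+\frac{1}{1+\sqrt{m}})^2$ multi-vehicle loss (Lemma 3), and the $44\log T$ factor from Theorem 1, and the same running-time accounting. The delicate point you flag --- that the three factors are simply multiplied without verifying they do not interact, and that re-assignments respect the discretized windows --- is real, but the paper itself does not address it either, so your argument is faithful to (and no weaker than) the paper's own proof.
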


\section{\Large{Conclusions and Future Work}}
\label{s:conclusion}

We solve the Multiple Vehicles per Site versions of the Maximum-Profit Routing Problem, specifically,
the Fixed-Supply version and the Time-Variable Supply version.

\hspace{5mm}We leave for future consideration probablisitc approaches for all versions of MPRP, 
e.g. algorithmic solutions whose output is, with high probability, within a certain bound of the optimum for the given instance.
Proving negative results or lower bounds, e.g. inaproximability within a certain ratio, would also be of interest.

\bigskip
\end{document}